\documentclass[twoside,12pt,a4paper]{article}
\usepackage[T1]{fontenc}
\usepackage{textcomp}
\usepackage[english]{babel}

\usepackage{amssymb,,charter,courier,eulervm}
\usepackage[scaled]{helvet}

\pagestyle{myheadings} \markboth{L. Staiger}{A note on Automatic
  \textsc{Baire} property}

\newtheorem{theorem}{Theorem}
\newtheorem{lemma}[theorem]{Lemma}
\newtheorem{corollary}[theorem]{Corollary}
\newtheorem{property}[theorem]{Property}
\newtheorem{definition}{Definition}
\newtheorem{example}{Example}

\newenvironment{proof}{\par\noindent\upshape\textbf{Proof.}~}{\qed\medskip}
\newenvironment{remark}{\noindent\textsl{Remark.}~}{\par}

\newcommand\pref[1]{\mathbf{pref}(#1)}
\newcommand\qed{\hfill $\square$\par}

\renewcommand\setminus{\smallsetminus}

\newcommand\url{}
\title{A note on Automatic \textsc{Baire} property} 
\author{\itshape{}\bfseries{}Ludwig
  Staiger\thanks{email: \ttfamily{}\textbf{staiger@informatik.uni-halle.de}}\\
  Martin-Luther-Universit\"at\ Halle-Wittenberg\\Institut f\"ur Informatik
  \\von-Seckendorff-Platz 1, D--06099 Halle (Saale), Germany} 
\date{}

\begin{document}
\maketitle{}

\thispagestyle{empty}

\begin{abstract}
  Automatic Baire property is a variant of the usual Baire property which is
  fulfilled for subsets of the Cantor space accepted by finite automata. We
  consider the family $\mathcal{A}$ of all subsets of the Cantor space having the
  Automatic Baire property. In particular we show that not all finite subsets
  have the Automatic Baire property, and that already a slight increase of the
  computational power of the accepting device may lead beyond the class
  $\mathcal{A}$.
\end{abstract}

In \cite{LATA20/finkel,ijfcs21/finkel} Finkel introduced an automata-theoretic
variant of the topological Baire property for subsets of the Cantor space. He
showed that this Automatic Baire property is valid for regular
$\omega$-languages, that is, for subsets of the Cantor space definable by
finite automata.

In this note we investigate which $\omega$-languages beyond regular ones have
the Automatic Baire property. We get a full characterisation of
$\omega$-languages of first Baire category as well as of finite
$\omega$-languages having the Automatic Baire property. In this respect,
disjunctive $\omega$-words, that is, $\omega$-words random w.r.t. finite
automata in the measure-theoretic approach (cf. \cite{dcfs/St18}) play a major
r\^ole. Here, as a tool, we use the measure-category coincidence for regular
$\omega$-languages (see \cite{eik/St76}, Theorem~3 of \cite{csl/St97},
\cite{lics/VaraccaVoe}, or Section~9.4 of \cite{jacm/VaraccaVoe}).

Moreover, we show that, besides definability by finite automata, other
computational constraints do not imply Automatic Baire property.  To this end
we derive $\omega$-languages closed or open in the topology of the Cantor
space definable by simple one-counter automata not having the Automatic Baire
property.

\section{Preliminaries}
\label{sec.intro}

\subsection{Notation}
\label{sec.notat}
We introduce the notation used throughout the paper. By
$\mathbb{N} = \{ 0,1,2,\ldots\}$ we denote the set of natural numbers. Its
elements will be usually denoted by letters $i,\dots,n$. Let $X$ be an
alphabet of cardinality $|X| \ge 2$. Then $X^*$ is the set of finite words on
$X$, including the \textit{empty word} $e$, and $X^\omega$ is the set of
infinite strings ($\omega$-words) over $X$.  Subsets of $X^*$ will be referred
to as \textit{languages} and subsets of $X^\omega$ as
\textit{$\omega$-languages}.

For $w\in X^*$ and $\eta\in X^*\cup X^\omega$ let $w \cdot{}\eta$ be their
\textit{concatenation}.  This concatenation product extends in an obvious way
to subsets $W \subseteq X^*$ and $B\subseteq X^*\cup X^\omega$. For a language
$W$ let $W^* := \bigcup_{i \in \mathbb{N}} W^i$, and
$W^\omega:=\{w_1\cdots w_i\cdots: w_i\in W\setminus \{e\}\}$ be the set of
infinite strings formed by concatenating non-empty words in $W$.  Furthermore,
$|w|$ is the \textit{length} of the word $w\in X^*$ and $\pref B$ is the set
of all finite prefixes of strings in $B\subseteq X^*\cup X^\omega$.  We shall
abbreviate $w\in \pref{\{\eta\}}\ (\eta\in X^*\cup X^\omega)$ by
$w\sqsubseteq \eta$.

An $\omega$-word $\zeta\in X^{\omega}$ is \emph{disjunctive} (or \emph{rich},
\cite{csl/St97}) if every $w\in X^{*}$ is an infix of $\zeta$, that is,
$\zeta\in \bigcap_{w\in X^*} X^{*}\cdot w\cdot X^{\omega}$, and an
$\omega$-word $\xi\in X^{\omega}$ is \emph{ultimately periodic} if there are
words $w,v\in X^{*}$ such that $\xi= w\cdot v^{\omega}=w\cdot v\cdot
v\cdots$. The $\omega$-language of all ultimately periodic $\omega$-words will
be referred to as $\mathsf{Ult}$.

\subsection{Regular $\omega$-languages}
\label{sec.regular}
As usual, a language $W\subseteq X^{*}$ is \emph{regular} if it is obtained
from finite languages via the operations union, concatenation and star. An
$\omega$-language $F\subseteq X^{\omega}$ is \emph{regular} if it is of the
form $F=\bigcup_{i=1}^{n}W_{i}\cdot V_{i}^{\omega}$ where $n\in \mathbb{N}$
and $W_{i},V_{i}\subseteq X^{*}$ are regular languages.

We assume the reader to be familiar with the basic facts of the theory of
regular languages and finite automata. 
For more details on $\omega$-languages and regular $\omega$-languages see the
books \cite{PP04,TB70} or the papers \cite{handbook/St97,Th90,ic/Wag79}.

\pagebreak{}

The following is well-known.
\begin{theorem}\label{th.regBool}
  The family of regular $\omega$-languages is a Boolean algebra, and every
  non-empty regular $\omega$-language contains an ultimately periodic
  $\omega$-word.
\end{theorem}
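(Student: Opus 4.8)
The plan is to prove the two assertions separately, since the second is almost immediate from the normal form $F=\bigcup_{i=1}^{n}W_{i}\cdot V_{i}^{\omega}$ whereas the first rests on the (nontrivial) closure under complementation. For the ultimately periodic $\omega$-word: if $F=\bigcup_{i=1}^{n}W_{i}\cdot V_{i}^{\omega}\neq\emptyset$, then $W_{i}\cdot V_{i}^{\omega}\neq\emptyset$ for some $i$, so both $W_{i}\neq\emptyset$ and $V_{i}^{\omega}\neq\emptyset$; by the definition of the $\omega$-power, $V_{i}^{\omega}\neq\emptyset$ forces $V_{i}\setminus\{e\}\neq\emptyset$. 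Choosing $w\in W_{i}$ and $v\in V_{i}\setminus\{e\}$ gives $v^{\omega}\in V_{i}^{\omega}$ and hence $w\cdot v^{\omega}\in W_{i}\cdot V_{i}^{\omega}\subseteq F$, and $w\cdot v^{\omega}$ is ultimately periodic by definition. So here the only content is unwinding the definitions.

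For the Boolean-algebra part, closure under union is built into the normal form, and closure under intersection will follow from \textsc{De~Morgan}'s laws once complementation is settled (alternatively by a direct product construction). The key step is therefore: if $F\subseteq X^{\omega}$ is regular, then so is $X^{\omega}\setminus F$. I would first recall \textsc{B\"uchi}'s characterisation, namely that the $\omega$-languages of the shape $\bigcup_{i=1}^{n}W_{i}\cdot V_{i}^{\omega}$ are exactly those accepted by finite automata with the \textsc{B\"uchi} acceptance condition: from the normal form one builds such an automaton using the closure properties of regular languages, and conversely the $\omega$-language accepted by an automaton with state set $Q$ equals $\bigcup_{q\in Q_{f}}W_{q_{0},q}\cdot(U_{q,q})^{\omega}$, where $W_{p,q}$ is the regular language of words labelling a path from $p$ to $q$ and $U_{q,q}$ that of words labelling a path from $q$ to $q$ through an accepting state.

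The heart of the proof, and the expected main obstacle, is \textsc{B\"uchi}'s complementation theorem. For an automaton $\mathcal{A}$ I would use the congruence $\approx$ of finite index on $X^{*}$ that identifies $u$ and $v$ exactly when, for every pair of states $p,q$, there is a path from $p$ to $q$ on input $u$ iff there is one on input $v$, and likewise for paths passing through an accepting state. A \textsc{Ramsey}-type argument then shows that every $\xi\in X^{\omega}$ admits a factorisation $\xi=u\cdot v_{1}\cdot v_{2}\cdots$ with all $v_{j}$ in one class $V$ and $u$ in a class $U$, and that membership $\xi\in L(\mathcal{A})$ depends only on the pair $(U,V)$. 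Consequently both $L(\mathcal{A})$ and $X^{\omega}\setminus L(\mathcal{A})$ are finite unions of sets $U\cdot V^{\omega}$ with $U,V$ regular, i.e.\ are of the required form; together with closure under union and intersection this yields the Boolean-algebra claim. (One could also derive complementation from \textsc{McNaughton}'s determinisation theorem by complementing the table of an equivalent deterministic Muller automaton; the congruence approach is shorter to describe.) The delicate points are the \textsc{Ramsey} argument and the verification that replacing the blocks of a factorisation by $\approx$-equivalent words cannot change acceptance; everything else is routine manipulation of regular languages, and for full details I would refer to \cite{PP04,TB70,handbook/St97,Th90}.
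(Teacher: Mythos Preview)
Your proposal is correct and follows the standard route (\textsc{B\"uchi}'s characterisation together with the Ramsey/congruence argument for complementation, plus the trivial extraction of an ultimately periodic word from the normal form). Note, however, that the paper does not prove this theorem at all: it is introduced with ``The following is well-known'' and simply stated, the surrounding references \cite{PP04,TB70,handbook/St97,Th90} being the intended pointers; so your write-up is strictly more than what the paper supplies, and there is no paper-proof to compare against beyond that citation.
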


\subsection{The Cantor space}
\label{sec.cantor}
We consider $X^\omega$ as a topological space (Cantor space). The
\emph{closure} of $F \subseteq X^\omega$ (smallest closed set containing $F$)
is $\mathcal{C}(F) := \{\xi : \pref{\{\xi\}} \subseteq \pref F\}$. The \emph{open
  sets} in Cantor space are the $\omega$-languages of the form
$W\cdot X^\omega$.  Countable unions of closed sets are referred to as
\emph{$\Sigma_{2}$-sets}, their complements as \emph{$\Pi_{2}$-sets}. The
closure $\mathcal{C}(F)$ of a regular $\omega$-language $F \subseteq X^\omega$
is again regular (\cite{eik/St76,Tr62}).

Next we recall some further topological notions, see
\cite{book/kuratowski66,Oxtoby80}.  
An $\omega$-language $F\subseteq X^{\omega}$ is \emph{nowhere dense in
  $X^{\omega}$} if its closure $\mathcal{C}(F)$ does not contain a non-empty
open subset. This property is equivalent to the fact that for all
$v\in \pref F$ there is a $w\in X^{*}$ such that $v\cdot w\notin \pref F$. If
a regular $\omega$-language $F\subseteq X^{\omega}$ is nowhere dense then
there is a word $w\in X^{*}$ such that
$F\subseteq X^{\omega}\setminus X^{*}\cdot w\cdot X^{\omega}$ \cite{eik/St76}.

Moreover, a subset $F\subseteq X^{\omega}$ is \emph{meagre} or of \emph{first
  Baire category} if it is a countable union of nowhere dense sets.

Any subset of a nowhere dense set is nowhere dense, hence, every subset of a
meagre set is again meagre. A finite union of nowhere dense sets is nowhere
dense, and a countable union of meagre sets is meagre.

The following property is a consequence of the fact that in Cantor space no
non-empty open subset is of first Baire category.
\begin{property}\label{pty.open}
  Let $F\subseteq X^{\omega}$ be of first Baire category and
  $E\subseteq X^{\omega}$ be open. If their symmetric difference
  $F\ \Delta\ E$ is of first Baire category then $E=\emptyset$.
\end{property}

\section{Measure and Category}
\label{sec.meascat}

In this section we consider the relation between measures on Cantor space and
topological density.

For every $w\in X^*$ the ball
$w\cdot X^\omega=\bigcup_{x\in X}wx\cdot X^\omega$ is a disjoint union of its
sub-balls. Thus $\mu(w\cdot X^\omega) = \sum_{x\in X}\mu(wx\cdot X^\omega)$
for every measure $\mu$ on $X^\omega$. The \emph{support} of a measure $\mu$
on $X^\omega$, $\mathbf{supp}(\mu)$, is the smallest closed subset of
$X^\omega$ such that $\mu(\mathbf{supp}(\mu))= \mu(X^\omega)$.

As measures $\mu$ on $X^\omega$ we consider finite non-null measures
($0<\mu(X^\omega)<\infty$) having the following property that the measure of a
non-null sub-ball $wx\cdot X^\omega$ does not deviate too much from
$\mu(w\cdot X^\omega)$ (cf. \cite{csl/St97,jacm/VaraccaVoe}).
\begin{definition}[Balance condition]\label{def.bala}\upshape{} A measure
  $\mu$ on $X^{\omega}$ is referred to as \emph{balanced} (or \emph{bounded
    away from zero \upshape{\cite{jacm/VaraccaVoe}}}) provided there is a constant
  $c_\mu> 0$ depending only on $\mu$ such that for all words $w\in X^*$ and
  every $x\in X$ we have $\mu(wx\cdot X^\omega) = 0$ or
  $c_\mu\cdot \mu(w\cdot X^\omega)\le \mu(wx\cdot X^\omega)$.
\end{definition}

In the book by Oxtoby \cite{Oxtoby80} analogies between topological density
and measure, in particular, the ``duality'' between measure and category, are
discussed. The papers \cite{eik/St76,csl/St97,lics/VaraccaVoe} and
\cite{jacm/VaraccaVoe} show that for regular \mbox{$\omega$-}lan\-guages in
Cantor space measure and category coincide.

\begin{theorem}[Theorem~3 of \cite{csl/St97}]\label{th.MCreg}  
  Let $F\subseteq X^{\omega}$ be a regular $\omega$-language. Then the
  following conditions are equivalent:
  \begin{enumerate}
  \item{}No $\zeta \in F$ is a disjunctive $\omega$-word.\label{th.MCreg1}
  \item{}$F$ is of first Baire category.\label{th.MCreg2}
  \item{}For all measures $\mu$ with $\mathbf{supp}(\mu)=X^\omega$ satisfying
    the balance condition it holds $\mu(F)=0$.\label{th.MCreg3}
  \item{}There is a measure $\mu$ with $\mathbf{supp}(\mu)=X^\omega$
    satisfying the balance condition such that $\mu(F)=0$.\label{th.MCreg4}
  \end{enumerate}
\end{theorem}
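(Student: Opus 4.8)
The plan is to prove the cycle of implications $\ref{th.MCreg1}\Rightarrow\ref{th.MCreg2}\Rightarrow\ref{th.MCreg4}\Rightarrow\ref{th.MCreg3}\Rightarrow\ref{th.MCreg1}$, exploiting regularity of $F$ at the two ends of the chain and using the balance condition together with nowhere-density in the middle. For $\ref{th.MCreg1}\Rightarrow\ref{th.MCreg2}$, I would argue contrapositively via the structure theory of regular $\omega$-languages: if $F$ is regular and not of first Baire category, then by the Baire category theorem $F$ is non-meagre, so (using a deterministic Muller automaton for $F$ and the fact that a regular $\omega$-language which is not meagre must be ``large'' on some ball) there is a word $u$ with $u\cdot X^\omega\setminus F$ nowhere dense — equivalently, $F$ is residual in some ball $u\cdot X^\omega$. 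In that ball one can then build, state by state of the automaton, a disjunctive $\omega$-word that stays in $F$: at each stage one is at some reachable control state from which $F$ is still co-meagre in the corresponding sub-ball, so one can append any desired finite block and still continue; interleaving an enumeration of all words of $X^*$ as infixes yields a disjunctive $\zeta\in F$. The implication $\ref{th.MCreg3}\Rightarrow\ref{th.MCreg1}$ is the other place regularity is used: a disjunctive $\omega$-word $\zeta$ is, by definition, in every ball $X^*wX^*$ viewed as a ``cylinder on infixes'', and for a regular $F$ containing such a $\zeta$ one shows $\mu(F)>0$ for a suitable balanced full-support measure — concretely, take $\mu$ to be (a variant of) the uniform Bernoulli measure on $X^\omega$, which is balanced with $c_\mu=|X|^{-1}$ and has full support; the point is that $F\supseteq\{\zeta\}$ regular and disjunctive forces $F$ to contain a whole ball up to a nowhere-dense remainder, hence to have positive measure.

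For $\ref{th.MCreg2}\Rightarrow\ref{th.MCreg4}$ it suffices to exhibit one balanced measure of full support that assigns $F$ measure zero; again the uniform Bernoulli measure works, and one reduces to showing that a nowhere dense subset of $X^\omega$ has measure zero under it — but this is not automatic for arbitrary nowhere dense sets, so here is where I would need to use regularity of $F$ once more: by the quoted result of \cite{eik/St76}, a nowhere dense \emph{regular} $F$ satisfies $F\subseteq X^*wX^\omega$ for some $w$, and $\mu(X^*wX^\omega)$ can be estimated by a Borel--Cantelli-type argument (the events ``$w$ does not occur in the block of length $|w|$ starting at position $k|w|$'' are independent with product tending to $0$), giving $\mu(F)=0$; since a meagre regular set is in fact a finite union of nowhere dense regular sets (by \cite{th.regBool} and closure properties), countable additivity finishes it. The implication $\ref{th.MCreg4}\Rightarrow\ref{th.MCreg3}$ is the genuinely measure-theoretic step and should follow from a density/Lebesgue-point argument: if some balanced full-support $\mu$ gives $\mu(F)=0$ but another such $\nu$ gave $\nu(F)>0$, then by the Lebesgue density theorem for $\nu$ there is a ball $u\cdot X^\omega$ in which $F$ has $\nu$-density close to $1$; translating this into a combinatorial statement about prefixes (using that $\nu$ is balanced, so $\nu$-density and ``prefix density'' are comparable) contradicts $\mu(F)=0$ via the balance condition on $\mu$, which prevents $F$ from being simultaneously $\mu$-null and prefix-dense in a ball.

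The main obstacle I expect is $\ref{th.MCreg4}\Rightarrow\ref{th.MCreg3}$, i.e.\ upgrading ``$\mu(F)=0$ for \emph{one} balanced full-support $\mu$'' to ``for \emph{all}'' such measures: this is exactly the non-trivial content of the measure--category coincidence, and it cannot be purely soft because it fails for non-regular $F$. I would handle it by routing through category rather than comparing the two measures directly — that is, close the cycle as $\ref{th.MCreg4}\Rightarrow\ref{th.MCreg1}$ (a regular $F$ with a disjunctive member cannot be $\mu$-null for a balanced full-support $\mu$, since disjunctiveness forces prefix-density of $F$ on a ball and the balance condition then forces positive measure) and then $\ref{th.MCreg1}\Rightarrow\ref{th.MCreg2}\Rightarrow\ref{th.MCreg3}$, where $\ref{th.MCreg2}\Rightarrow\ref{th.MCreg3}$ is the $\forall\mu$ version of the Borel--Cantelli estimate above applied uniformly to $F\subseteq X^*wX^\omega$ — the balance constant $c_\mu$ enters the geometric-decay bound but its mere positivity is all that is needed. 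Thus regularity is invoked precisely twice (to get a disjunctive word into a co-meagre ball, and to reduce nowhere-density to a single forbidden factor), and the balance condition is what makes the measure estimates uniform over all admissible $\mu$.
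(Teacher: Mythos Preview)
The paper does \emph{not} give a proof of this theorem: it is quoted verbatim as ``Theorem~3 of \cite{csl/St97}'' and used as a black box throughout (for Corollary~\ref{c.ABaire}, Lemma~\ref{l.ABaire}, Example~\ref{ex.closed}, etc.). There is therefore nothing in the present paper to compare your proposal against; any detailed proof would have to be checked against the original source \cite{csl/St97} (and the related \cite{eik/St76,jacm/VaraccaVoe}).

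That said, two remarks on your outline itself. First, in the step \ref{th.MCreg2}$\Rightarrow$\ref{th.MCreg3} you write ``a nowhere dense regular $F$ satisfies $F\subseteq X^{*}wX^{\omega}$'' and then run Borel--Cantelli on the events ``$w$ does not occur in block $k$''. The inclusion is stated the wrong way round: the structural fact you need (and the one that makes the estimate work) is that a nowhere dense regular $F$ has a \emph{forbidden} factor, i.e.\ $F\subseteq X^{\omega}\setminus X^{*}wX^{\omega}$ for some $w\ne e$; then the block argument gives $\mu(X^{\omega}\setminus X^{*}wX^{\omega})=0$ and hence $\mu(F)=0$. (Note also that for a general balanced $\mu$ the block events are not independent; what you actually use is the conditional lower bound $\mu(uv\cdot X^{\omega})\ge c_{\mu}^{|v|}\mu(u\cdot X^{\omega})$, which yields $\mu(\text{``}w\text{ not in block }k\text{''}\mid\text{past})\le 1-c_{\mu}^{|w|}$ and hence geometric decay.)

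Second, the sentence ``a meagre regular set is in fact a finite union of nowhere dense regular sets (by Theorem~\ref{th.regBool} and closure properties)'' is not something that follows from Theorem~\ref{th.regBool}; it is essentially part of what has to be proved. The clean way is to show directly that a \emph{meagre} regular $F$ already admits a forbidden factor (equivalently, is contained in a closed nowhere dense regular set), which is exactly the content of \ref{th.MCreg2}$\Leftrightarrow$\ref{th.MCreg1} together with Eq.~(\ref{eq.R0}). Routing \ref{th.MCreg2}$\Rightarrow$\ref{th.MCreg3} through \ref{th.MCreg1} (as you in effect do later when you reroute \ref{th.MCreg4}$\Rightarrow$\ref{th.MCreg3}) avoids this circularity.
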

Items~\ref{th.MCreg1} and \ref{th.MCreg2} of Theorem~\ref{th.MCreg} show that
the union of all regular $\omega$-languages of first Baire category
$\mathbf{R}_{0}$ is the complement of the set of disjunctive $\omega$-words
(see e.g. Korollar 8 of \cite{eik/St76}).
\begin{equation}
  \label{eq.R0}
  \mathbf{R}_{0}= \bigcup\nolimits_{w\in X^*}(X^{\omega} \setminus X^{*}\cdot
  w\cdot X^{\omega})\, 
\end{equation}
\section{Baire property and Automatic Baire property}
\label{sec.baire}

Automatic Baire property was introduced by Finkel
\cite{LATA20/finkel,ijfcs21/finkel}. Here we define this variant of the usual
Baire property and derive several of its properties. First we recall the
following (see e.g. \cite{book/kuratowski66,Oxtoby80}).
\begin{definition}\label{def.Baire}\upshape{}
  A subset $F\subseteq X^{\omega}$ has the \emph{Baire property} if there is
  an open set $E\subseteq X^{\omega}$ such that $F\ \Delta\ E$ is of first
  Baire category.
\end{definition}
\begin{theorem}
  Every Borel set of the Cantor space has the Baire property.
\end{theorem}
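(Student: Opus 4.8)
The plan is to prove that every Borel subset of the Cantor space has the Baire property by the standard route: show that the family $\mathcal{B}$ of sets having the Baire property is a $\sigma$-algebra, and then observe that it contains all open sets, hence contains all Borel sets. Since the Borel $\sigma$-algebra is by definition the smallest $\sigma$-algebra containing the open sets, this suffices.

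First I would note that every open set $E$ trivially has the Baire property, witnessed by $E$ itself, since $E\ \Delta\ E=\emptyset$ is of first Baire category. Next I would show $\mathcal{B}$ is closed under countable unions: if $F_{i}\ \Delta\ E_{i}$ is meagre with $E_{i}$ open for each $i\in\mathbb{N}$, then $E:=\bigcup_{i}E_{i}$ is open and $(\bigcup_{i}F_{i})\ \Delta\ E\subseteq\bigcup_{i}(F_{i}\ \Delta\ E_{i})$, which is a countable union of meagre sets, hence meagre; since any subset of a meagre set is meagre, $\bigcup_{i}F_{i}\in\mathcal{B}$. The properties of meagre sets needed here — countable unions of meagre sets are meagre, and subsets of meagre sets are meagre — are exactly those recorded in the excerpt.

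The main obstacle is closure under complementation, because the complement of an open set is closed, not open, so one cannot simply re-use the same witness. The standard fix is: given $F$ with $F\ \Delta\ E$ meagre, $E$ open, write the closed set $\mathcal{C}(E)\setminus E$ (the boundary of $E$); this is a closed set with empty interior, hence nowhere dense, hence meagre. Then take $E':=X^{\omega}\setminus\mathcal{C}(E)$, which is open, and verify that $(X^{\omega}\setminus F)\ \Delta\ E'\subseteq (F\ \Delta\ E)\cup(\mathcal{C}(E)\setminus E)$. Indeed, $X^{\omega}\setminus E'=\mathcal{C}(E)$, so $(X^{\omega}\setminus F)\ \Delta\ E'=F\ \Delta\ \mathcal{C}(E)\subseteq (F\ \Delta\ E)\cup(E\ \Delta\ \mathcal{C}(E))=(F\ \Delta\ E)\cup(\mathcal{C}(E)\setminus E)$, using $E\subseteq\mathcal{C}(E)$. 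The right-hand side is a union of two meagre sets, hence meagre, so $X^{\omega}\setminus F\in\mathcal{B}$.

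Having shown $\mathcal{B}$ is a $\sigma$-algebra containing all open sets, I would conclude that $\mathcal{B}$ contains the Borel $\sigma$-algebra, which is precisely the claim. The only point requiring a little care beyond routine set algebra is the assertion that the boundary $\mathcal{C}(E)\setminus E$ of an open set is nowhere dense; this follows because $\mathcal{C}(E)\setminus E$ is closed and its interior is empty (any open ball inside $\mathcal{C}(E)\setminus E$ would be disjoint from $E$ yet contained in the closure of $E$, a contradiction), which is exactly the definition of nowhere dense given earlier.
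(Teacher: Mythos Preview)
Your proof is correct and is the standard textbook argument (as found, e.g., in Oxtoby or Kuratowski). Note, however, that the paper does not actually prove this theorem: it is stated without proof as a classical fact, with references to \cite{book/kuratowski66,Oxtoby80}, so there is no ``paper's own proof'' to compare against. Your write-up supplies precisely the proof one would find in those references.
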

The Automatic Baire property requires the sets $E$ and $F\ \Delta\ E$ to be
restricted in some sense to regular $\omega$-languages.
\begin{definition}[Automatic Baire property]\label{def.ABaire}\upshape{}
  A subset $F\subseteq X^{\omega}$ has the \emph{Automatic Baire property} if
  there are regular $\omega$-languages $E, F'\subseteq X^{\omega}$ where $E$
  is open and $F'$ is a $\Sigma_{2}$-set of first Baire category such that
  \begin{equation}
    \label{eq.ABaire}
    F\ \Delta\ E\subseteq F'\,.
  \end{equation}
\end{definition}
Then it holds the following.
\begin{theorem}[\cite{LATA20/finkel,ijfcs21/finkel}]\label{th.finkel}
  Every regular $\omega$-language has the Automatic Baire property.
\end{theorem}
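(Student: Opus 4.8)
The goal is to show that every regular $\omega$-language $F$ satisfies the inclusion \eqref{eq.ABaire} with $E$ regular open and $F'$ regular of first Baire category. The plan is to produce $E$ explicitly from a finite-automaton representation of $F$ and then verify that the symmetric difference $F\ \Delta\ E$ is a regular $\omega$-language of first Baire category, at which point $F'$ may simply be taken to be $F\ \Delta\ E$ itself. First I would fix a deterministic \textsc{Muller} (or \textsc{Büchi}) automaton recognising $F$ and, following the classical normal-form analysis, decompose $F$ as a finite union $\bigcup_{i}W_{i}\cdot V_{i}^{\omega}$ using Theorem~\ref{th.regBool}. The natural candidate for $E$ is the \emph{interior} of $F$, i.e.\ the largest open subset of $F$; since $F$ is regular, its interior is again regular and open (it has the form $U\cdot X^{\omega}$ for a regular language $U$, and regularity is preserved because the interior can be computed from the automaton's state structure). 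Thus $E\subseteq F$, which already forces $F\ \Delta\ E = F\setminus E$.

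The heart of the argument is then to show that $F\setminus E$, the ``boundary part'' of $F$, is of first \textsc{Baire} category. Here I would invoke Theorem~\ref{th.MCreg}: since $F\setminus E$ is regular, it suffices to show that no $\zeta\in F\setminus E$ is disjunctive. Suppose, for contradiction, that some disjunctive $\zeta$ lies in $F\setminus E$. Because $\zeta\in F$ and $F=\bigcup_{i}W_{i}\cdot V_{i}^{\omega}$, $\zeta$ belongs to some $W_{i}\cdot V_{i}^{\omega}$; being disjunctive, $\zeta$ meets every ball, and one can argue — using that a disjunctive $\omega$-word lies in the closure of its own shifts and that $W_{i}\cdot V_{i}^{\omega}$ is ``locally generated'' — that an entire ball $v\cdot X^{\omega}$ around some prefix $v\sqsubseteq\zeta$ is contained in $F$. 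That ball then lies in the interior $E$ of $F$, contradicting $\zeta\notin E$. Concretely: a disjunctive $\omega$-word visits a recurrent loop of the automaton for $F$ after \emph{every} prefix in a way that can be completed along any extension, so the deterministic automaton, once in such a recurrent component, accepts every continuation; this yields the ball inside $F$.

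By Theorem~\ref{th.MCreg} (equivalence of items \ref{th.MCreg1} and \ref{th.MCreg2}), $F\setminus E$ is therefore of first \textsc{Baire} category, and it is regular since regular $\omega$-languages form a \textsc{Boolean} algebra (Theorem~\ref{th.regBool}) and $E$ is regular. Setting $F' := F\ \Delta\ E = F\setminus E$ completes the verification of Definition~\ref{def.ABaire}. The step I expect to be the main obstacle is making precise the claim that a disjunctive $\omega$-word in a regular set forces a whole ball into that set — that is, extracting from ``$\zeta$ disjunctive and accepted'' the genuinely open behaviour of the accepting automaton. The cleanest route is probably to combine the structural characterisation $\mathbf{R}_{0}=\bigcup_{w}(X^{\omega}\setminus X^{*}wX^{\omega})$ from \eqref{eq.R0} with the nowhere-density description of regular sets recalled in Section~\ref{sec.cantor}: if $F\setminus E$ contained a disjunctive word it could not be a subset of any $X^{\omega}\setminus X^{*}wX^{\omega}$, so it would fail to be nowhere dense, and one then has to upgrade ``not nowhere dense'' to ``contains an open ball inside $F$'' using the interior's maximality — which is where the automaton-theoretic bookkeeping is unavoidable.
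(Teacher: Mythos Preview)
The paper does not give its own proof of Theorem~\ref{th.finkel}; the result is simply attributed to Finkel \cite{LATA20/finkel,ijfcs21/finkel}. So there is no ``paper's proof'' to compare against, only the question of whether your plan would work.

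Your plan has a genuine gap: choosing $E$ to be the \emph{interior} of $F$ is wrong. Take $X=\{0,1\}$ and $F=(0^{*}1)^{\omega}$, the regular $\omega$-language of words containing infinitely many $1$'s. Every ball $w\cdot X^{\omega}$ contains $w\cdot 0^{\omega}\notin F$, so the interior of $F$ is empty and your $E=\emptyset$. Then $F\ \Delta\ E=F$, but $F$ is \emph{not} of first Baire category (its complement $X^{*}\cdot 0^{\omega}$ is countable, hence meagre, and Cantor space is not meagre). Thus no regular meagre $F'$ can contain $F\ \Delta\ E$, and your construction fails. The same example refutes the key claim you isolate as ``the main obstacle'': a disjunctive $\omega$-word lies in $F$, yet $F$ contains no ball whatsoever, so the presence of a disjunctive word in a regular set does \emph{not} force a ball into that set.

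What goes wrong conceptually is that the interior is the right open approximant only for \emph{closed} sets; for a general regular $F$ you need an open $E$ that may properly overshoot $F$. In the example above the correct witness is $E=X^{\omega}$, giving $F\ \Delta\ E=X^{*}\cdot 0^{\omega}$, which is regular and nowhere dense. A workable route is to use the Boolean-combination structure: write $F$ as a Boolean combination of sets of the form $W\cdot X^{\omega}$ and $\overrightarrow{V}=\{\xi:\pref{\{\xi\}}\cap V\mbox{ is infinite}\}$ (the standard $G_{\delta}$ building blocks of regular $\omega$-languages), find for each such block an open regular $E$ with regular meagre symmetric difference (for $\overrightarrow{V}$ one takes $E$ to be the interior of $\mathcal{C}(\overrightarrow{V})$, using that $\overrightarrow{V}$ is dense in its closure), and then close under Boolean operations exactly as in Lemma~\ref{l.Boole}. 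Alternatively, work directly with a deterministic automaton and let $E$ collect the balls $w\cdot X^{\omega}$ for which the state reached after $w$ leads to acceptance along a comeagre set of continuations; the point is that $E$ must be defined by a \emph{residual} (state-level) criterion, not by the topological interior of $F$.
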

Next we show that in Definition~\ref{def.ABaire} the requirement that $F'$ be
a $\Sigma_{2}$-set can be dropped.
\begin{corollary}\label{c.drop}
  Let $F$ be a regular $\omega$-language of first Baire category. Then there
  is a regular $\omega$-language $F'$ of first Baire category such that
  $F\subseteq F'$ and $F'$ is a $\Sigma_{2}$-set.
\end{corollary}
\begin{proof}
  Since $F$ is regular, in view of Theorem~\ref{th.finkel} there are regular
  $\omega$-lan\-guages $E, F'\subseteq X^{\omega}$ such that $E$ is open, $F'$ a
  $\Sigma_{2}$-set of first Baire category and
  \mbox{$F\ \Delta\ E\subseteq F'$}. Now the assertion follows from
  Property~\ref{pty.open}.
\end{proof}

We derive some properties of the class $\mathcal{A}$ of all $\omega$-languages
having the Automatic Baire property. It is obvious that every
$\omega$-language which has the Automatic Baire property has also the Baire
property.
\begin{lemma}\label{l.Boole}
  $\mathcal{A}$ is a Boolean algebra.
\end{lemma}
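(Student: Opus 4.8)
The plan is to verify closure under finite unions and under complementation directly from Definition~\ref{def.ABaire}; closure under intersection then follows, and $\mathcal{A}$ is non-empty because it contains every regular $\omega$-language by Theorem~\ref{th.finkel}, in particular $\emptyset$ and $X^{\omega}$. Two facts will be used throughout: the elementary set inclusion
\[
  (A_{1}\cup A_{2})\ \Delta\ (B_{1}\cup B_{2})\ \subseteq\ (A_{1}\ \Delta\ B_{1})\cup(A_{2}\ \Delta\ B_{2})
\]
together with the analogous inclusion for $\cap$, and the fact (Theorem~\ref{th.regBool}) that the regular $\omega$-languages form a Boolean algebra which is moreover closed under the closure operator $\mathcal{C}$, since $\mathcal{C}(E)=\{\xi:\pref{\{\xi\}}\subseteq\pref E\}$ and $\pref E$ is a regular language whenever $E$ is a regular $\omega$-language.

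For unions, suppose $F_{i}\ \Delta\ E_{i}\subseteq F_{i}'$ for $i=1,2$, where $E_{i}$ is regular and open and $F_{i}'$ is regular and meagre. Then $E_{1}\cup E_{2}$ is again regular and open, $F_{1}'\cup F_{2}'$ is again regular and, as a finite union of meagre sets, meagre, and the displayed inclusion gives $(F_{1}\cup F_{2})\ \Delta\ (E_{1}\cup E_{2})\subseteq F_{1}'\cup F_{2}'$; hence $F_{1}\cup F_{2}\in\mathcal{A}$. The $\cap$-version of the inclusion, together with the fact that $E_{1}\cap E_{2}$ is again regular and open, settles $F_{1}\cap F_{2}\in\mathcal{A}$ in exactly the same way.

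The one delicate point, and the step I expect to be the main obstacle, is complementation, because the complement of an open set is closed but in general not open. Given $F\ \Delta\ E\subseteq F'$ with $E$ regular open and $F'$ regular meagre, put $G:=X^{\omega}\setminus\mathcal{C}(E)$, the interior of $X^{\omega}\setminus E$; it is open, and regular by the closure properties of regular $\omega$-languages recalled above. Since $E\subseteq\mathcal{C}(E)$ we have $G\subseteq X^{\omega}\setminus E$, hence $(X^{\omega}\setminus E)\ \Delta\ G=\mathcal{C}(E)\setminus E$, and the triangle inequality for symmetric difference gives
\[
  (X^{\omega}\setminus F)\ \Delta\ G\ \subseteq\ \bigl((X^{\omega}\setminus F)\ \Delta\ (X^{\omega}\setminus E)\bigr)\cup\bigl((X^{\omega}\setminus E)\ \Delta\ G\bigr)\ =\ (F\ \Delta\ E)\cup(\mathcal{C}(E)\setminus E)\,.
\]
Here $\mathcal{C}(E)\setminus E$ is the boundary of the open set $E$, hence closed and nowhere dense: a non-empty open subset of it would contain a ball $w\cdot X^{\omega}$ all of whose points lie in $\mathcal{C}(E)$, forcing $w\in\pref E$ and thus producing an $\eta\in E$ extending $w$, contradicting disjointness from $E$. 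As $\mathcal{C}(E)\setminus E$ is also regular and $F\ \Delta\ E\subseteq F'$, the union $F'\cup(\mathcal{C}(E)\setminus E)$ is a regular meagre $\omega$-language containing $(X^{\omega}\setminus F)\ \Delta\ G$, so $X^{\omega}\setminus F\in\mathcal{A}$. Together with the union step, this shows that $\mathcal{A}$ is a Boolean algebra; aside from the boundary argument just given, everything reduces to bookkeeping with the symmetric-difference inclusions and the closure properties of regular $\omega$-languages.
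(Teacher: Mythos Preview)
Your argument is correct, and for unions it coincides exactly with the paper's: both rely on the inclusion $(F_{1}\cup F_{2})\,\Delta\,(E_{1}\cup E_{2})\subseteq(F_{1}\,\Delta\,E_{1})\cup(F_{2}\,\Delta\,E_{2})$ and the closure of regular meagre $\omega$-languages under finite union.

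For complementation the paper is much terser: it simply records the identity $(X^{\omega}\setminus F)\,\Delta\,(X^{\omega}\setminus E)=F\,\Delta\,E$ and stops there, leaving to the reader the passage from the closed regular set $X^{\omega}\setminus E$ to an \emph{open} regular witness as required by Definition~\ref{def.ABaire}. You make this step explicit by taking $G=X^{\omega}\setminus\mathcal{C}(E)$ and absorbing the boundary $\mathcal{C}(E)\setminus E$ into the regular meagre bound; this is precisely the detail one has to supply (alternatively, one could invoke Theorem~\ref{th.finkel} for the regular set $X^{\omega}\setminus E$ as a black box). So your route is the same in spirit, just more carefully worked out at the one point where the paper is elliptical.
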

\begin{proof}
  This follows from
  $(F_{1}\cup F_{2})\ \Delta\ (E_{1}\cup E_{2})\subseteq (F_{1}\ \Delta\
  E_{1})\cup (F_{2}\ \Delta\ E_{2})$ and
  $(X^{\omega}\setminus F)\ \Delta\ (X^{\omega}\setminus E)=F\ \Delta\ E$ and
  the fact that the union of two regular $\omega$-languages of first Baire
  category is also regular and of first Baire category.
\end{proof}

\pagebreak{}
We derive a necessary condition for sets to have the Automatic Baire property.
\begin{lemma}\label{l.ABaire}
  Let $F\ \Delta\ E\subseteq F'$ where $E\subseteq X^{\omega}$ is open and
  $F'\subseteq X^{\omega}$ a regular $\omega$-language of first Baire
  category. Then for every measure $\mu$ with support
  $\mathbf{supp}(\mu)=X^{\omega}$ satisfying the balance condition it holds
  $\mu(F)=0$ if and only if $F$ is of first Baire category.
\end{lemma}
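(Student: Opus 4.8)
The plan is to exploit the measure--category coincidence for regular $\omega$-languages (Theorem~\ref{th.MCreg}) together with the fact that $F$ and $E$ agree outside the ``small'' set $F'$. Fix a balanced measure $\mu$ with $\mathbf{supp}(\mu)=X^{\omega}$.

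First I would prove the easy direction. Suppose $F$ is of first \textsc{Baire} category. Since $F'$ is regular and of first category, by Theorem~\ref{th.MCreg} (equivalence of \ref{th.MCreg2} and \ref{th.MCreg3}) we have $\mu(F')=0$. The hypothesis $F\ \Delta\ E\subseteq F'$ gives $E\subseteq F\cup F'$, so $E$ is a union of two sets of first category, hence itself of first category; being open, Property~\ref{pty.open} (applied with, say, $F$ replaced by $E$ and $E$ by $E$, or directly by the remark that no non-empty open set is meagre) forces $E=\emptyset$. Then $F=F\ \Delta\ E\subseteq F'$, so $\mu(F)\le\mu(F')=0$.

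For the converse, assume $\mu(F)=0$. From $F\ \Delta\ E\subseteq F'$ we get $E\subseteq F\cup F'$, whence $\mu(E)\le\mu(F)+\mu(F')=0$; but $E$ is open and $\mathbf{supp}(\mu)=X^{\omega}$, so $\mu(E)=0$ is only possible when $E=\emptyset$ (a non-empty open set $w\cdot X^{\omega}$ has $\mu(w\cdot X^{\omega})>0$ by the balance condition and $\mathbf{supp}(\mu)=X^{\omega}$). Hence $F=F\ \Delta\ E\subseteq F'$, and since any subset of a meagre set is meagre, $F$ is of first \textsc{Baire} category.

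The step I expect to be the main (though modest) obstacle is the clean extraction of ``$E=\emptyset$'' in both directions: one must be careful that $E$ is genuinely open (given), that $F'$ is genuinely meagre and $\mu$-null (given, plus Theorem~\ref{th.MCreg}), and that openness of $E$ together with either meagreness or $\mu$-nullity kills $E$ — the first via Property~\ref{pty.open} and the second via the balance condition and $\mathbf{supp}(\mu)=X^{\omega}$. Once $E=\emptyset$ is established, everything reduces to monotonicity of ``meagre'' and of $\mu$ under inclusion in $F'$, which is immediate.
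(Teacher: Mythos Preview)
Your proof is correct and follows essentially the same route as the paper: in each direction you use the inclusion $E\subseteq F\cup F'$ (equivalently $E\subseteq (E\,\Delta\,F)\cup F$) together with either meagreness or $\mu$-nullity of the right-hand side to force $E=\emptyset$, and then conclude $F\subseteq F'$. The only cosmetic difference is that the paper bounds $\mu(E)$ via $\mu(E\setminus F)\le\mu(E\,\Delta\,F)\le\mu(F')$ rather than via $\mu(F)+\mu(F')$, which amounts to the same thing.
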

\begin{proof}
  Let $F\ \Delta\ E \subseteq F'$ where $E$ is open and $F'$ is regular and of
  first Baire category. According to Theorem~\ref{th.MCreg} we have
  $\mu(F')=0$.

  If $\mu(F)=0$ then
  $\mu(E)=\mu(E)-\mu(F)\leq \mu(E\setminus F) \leq \mu(E\ \Delta\ F) \leq
  \mu(F')=0$ implies $E=\emptyset$. Thus $F=E\ \Delta\ F$ is of first Baire
  category.
  
  If $F$ and $E\ \Delta\ F$ are of first Baire category then
  $E\subseteq (E\ \Delta\ F)\cup F$ is also of first Baire category. Thus
  $E=\emptyset$. Consequently, $\mu(F)= \mu(E\ \Delta\ F)=0$.
\end{proof}
\begin{remark}
  Observe that in Lemma~\ref{l.ABaire} we did not use the fact that the open
  set $E$ is regular.
\end{remark}
The proof of Lemma~\ref{l.ABaire} shows also the following.
\begin{corollary}\label{c.ABaire}
  Let $F\subseteq X^{\omega}$ be of first Baire category. Then
  $F\in \mathcal{A}$ if and only if $F\subseteq F'$ for some regular
  $\omega$-language of first Baire category.
\end{corollary}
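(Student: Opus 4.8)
The plan is to prove Corollary~\ref{c.ABaire} by reading off both directions from the argument already carried out in Lemma~\ref{l.ABaire}, so the task reduces to unwinding the definitions and checking that no extra hypothesis is needed.

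First I would handle the easy direction. Suppose $F$ is of first \textsc{Baire} category and $F\subseteq F'$ for some regular $\omega$-language $F'$ of first \textsc{Baire} category. Then take $E=\emptyset$, which is trivially regular and open. We have $F\ \Delta\ E = F\subseteq F'$, so \eqref{eq.ABaire} holds and $F\in\mathcal{A}$.

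For the converse, assume $F\in\mathcal{A}$ and $F$ is of first \textsc{Baire} category. By Definition~\ref{def.ABaire} there are a regular open $\omega$-language $E$ and a regular $\omega$-language $F'$ of first \textsc{Baire} category with $F\ \Delta\ E\subseteq F'$. I want to conclude $E=\emptyset$, for then $F=F\ \Delta\ E\subseteq F'$, which is exactly what is claimed. To get $E=\emptyset$ I reuse the second half of the proof of Lemma~\ref{l.ABaire}: since $F$ is of first \textsc{Baire} category and $F\ \Delta\ E\subseteq F'$ with $F'$ of first \textsc{Baire} category, the inclusion $E\subseteq(E\ \Delta\ F)\cup F$ exhibits $E$ as contained in a union of two meagre sets, hence $E$ is itself of first \textsc{Baire} category; but $E$ is open, so by Property~\ref{pty.open} (or directly by the fact that no non-empty open subset of Cantor space is meagre) we get $E=\emptyset$.

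The only point that needs a moment's care — and the closest thing to an obstacle — is making explicit that this argument does not invoke measure at all, even though Lemma~\ref{l.ABaire} is phrased in measure-theoretic terms: the step ``$F,\,E\ \Delta\ F$ meagre $\Rightarrow E$ meagre $\Rightarrow E=\emptyset$'' is purely topological and uses only the closure of the meagre ideal under finite unions together with Property~\ref{pty.open}. Hence the corollary is genuinely a corollary of the \emph{proof} of Lemma~\ref{l.ABaire} rather than of its statement, which matches the remark preceding it.
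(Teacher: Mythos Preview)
Your proof is correct and matches the paper's intent exactly: the paper does not give a separate proof but simply remarks that Corollary~\ref{c.ABaire} follows from the proof of Lemma~\ref{l.ABaire}, and you have correctly unpacked this by taking $E=\emptyset$ for the easy direction and reusing the purely topological second paragraph of that proof (the inclusion $E\subseteq(E\ \Delta\ F)\cup F$ forcing $E=\emptyset$) for the converse. Your closing observation that no measure is invoked is also apt and consistent with the paper's own remark after Lemma~\ref{l.ABaire}.
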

Finite $\omega$-languages in ${\cal A}$ are characterised as follows.
\begin{corollary}\label{c.finite}
  Let $F\subseteq X^{\omega}$ be finite. Then $F\in\mathcal{A}$ if and only if
  $F$ does not contain a disjunctive $\omega$-word.
\end{corollary}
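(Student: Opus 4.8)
The plan is to derive this as a direct consequence of Corollary~\ref{c.ABaire} together with the structural description of regular $\omega$-languages of first Baire category coming from Theorem~\ref{th.MCreg} and formula~\eqref{eq.R0}. First I would observe that a finite $\omega$-language $F$ is automatically of first Baire category: each singleton $\{\xi\}$ is nowhere dense (for any $v\sqsubseteq\xi$ pick any letter $x$ with $vx\not\sqsubseteq\xi$, which exists since $X$ has at least two letters), and a finite union of nowhere dense sets is nowhere dense, hence meagre. So Corollary~\ref{c.ABaire} applies and tells us that $F\in\mathcal{A}$ iff $F\subseteq F'$ for some regular $\omega$-language $F'$ of first Baire category.

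Next I would translate the containment $F\subseteq F'$ into a statement about the individual elements of $F$. For the ``only if'' direction: if $F\in\mathcal{A}$, take the regular meagre $F'\supseteq F$. By Theorem~\ref{th.MCreg}, condition~\ref{th.MCreg2} implies condition~\ref{th.MCreg1}, so no $\zeta\in F'$ is disjunctive; since $F\subseteq F'$, no $\zeta\in F$ is disjunctive either. For the ``if'' direction: suppose no $\zeta\in F$ is disjunctive. Write $F=\{\xi_1,\dots,\xi_n\}$. For each $j$, non-disjunctiveness of $\xi_j$ means there is a word $w_j\in X^*$ that is \emph{not} an infix of $\xi_j$, i.e. $\xi_j\notin X^*\cdot w_j\cdot X^\omega$, equivalently $\xi_j\in X^\omega\setminus X^*\cdot w_j\cdot X^\omega$. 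By~\eqref{eq.R0} each set $X^\omega\setminus X^*\cdot w_j\cdot X^\omega$ is one of the building blocks of $\mathbf{R}_0$; in particular it is a regular $\omega$-language of first Baire category (it is nowhere dense, being contained in the closed set with no occurrence of $w_j$). Then $F':=\bigcup_{j=1}^{n}\bigl(X^\omega\setminus X^*\cdot w_j\cdot X^\omega\bigr)$ is a finite union of regular nowhere dense $\omega$-languages, hence regular and of first Baire category, and $F\subseteq F'$ by construction. Corollary~\ref{c.ABaire} then gives $F\in\mathcal{A}$.

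The only mild subtlety — and the step I would be most careful about — is the direction that builds the covering regular meagre set, specifically making sure the single-word-avoidance sets $X^\omega\setminus X^*\cdot w_j\cdot X^\omega$ are genuinely regular and genuinely of first Baire category. Regularity is clear since $X^*\cdot w_j\cdot X^\omega$ is open (of the form $W\cdot X^\omega$ with $W=X^*w_j$ regular) and the family of regular $\omega$-languages is a Boolean algebra by Theorem~\ref{th.regBool}; meagreness follows because the complement of $X^*w_jX^\omega$ is a closed set none of whose basic open subballs avoids $w_j$ forever, so it is nowhere dense — this is exactly the characterisation of nowhere dense sets recalled in Section~\ref{sec.cantor}, and it is also precisely the shape of the sets appearing in~\eqref{eq.R0}. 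Once that is in hand, everything else is bookkeeping with finite unions, and the proof is short.
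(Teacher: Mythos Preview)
Your proposal is correct and follows essentially the same route as the paper: observe that a finite $F$ is of first Baire category, apply Corollary~\ref{c.ABaire}, use Theorem~\ref{th.MCreg} for the ``only if'' direction, and for the ``if'' direction cover $F$ by the finite union $\bigcup_{\xi\in F}(X^{\omega}\setminus X^{*}\cdot w_{\xi}\cdot X^{\omega})$. The paper's write-up is terser (it simply asserts that the union is regular and nowhere dense), but the content and the key ingredients are identical to yours.
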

\begin{proof}
  If $F$ is finite then $F$ is of first Baire category. Now
  Corollary~\ref{c.ABaire} and Theorem~\ref{th.MCreg} imply that $F$ does not
  contain a disjunctive $\omega$-word.

  If $F$ is finite and does not contain a disjunctive $\omega$-word then for
  every $\xi\in F$ there is a $w_{\xi}$ such that
  $\xi\notin X^{*}\cdot w_{\xi}\cdot X^{\omega}$. Then
  $F\subseteq \bigcup_{\xi\in F}(X^{\omega}\setminus X^{*}\cdot w_{\xi}\cdot
  X^{\omega})$ which is a regular and nowhere dense $\omega$-language.
\end{proof}
Besides finite $\omega$-languages containing disjunctive $\omega$-words,
examples of sets not satisfying the Automatic Baire property are the following
ones.
\begin{lemma}\label{l.Ult}
  If $F\subseteq X^{\omega},\ \mathsf{Ult}\subseteq F\subseteq\mathbf{R}_{0}$,
  then $F$ does not have the Automatic Baire property.
\end{lemma}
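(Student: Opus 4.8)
The plan is to derive a contradiction from the assumption that $F$ satisfies the Automatic Baire property, using the fact that $F$ contains all ultimately periodic $\omega$-words but is contained in $\mathbf{R}_0$. First I would unpack the hypothesis: if $F\in\mathcal{A}$, then there is a regular open $E=W\cdot X^\omega$ and a regular $\omega$-language $F'$ of first Baire category with $F\ \Delta\ E\subseteq F'$. Since $F\subseteq\mathbf{R}_0$ and $\mathbf{R}_0$ is itself of first Baire category (by equation~\eqref{eq.R0}, each $X^\omega\setminus X^*\cdot w\cdot X^\omega$ is nowhere dense, being the complement of a dense open set, and the union is countable), $F$ is of first Baire category. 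Then $E\subseteq (E\ \Delta\ F)\cup F\subseteq F'\cup F$ is of first Baire category, and since $E$ is open, Property~\ref{pty.open} forces $E=\emptyset$. Hence $F=F\ \Delta\ E\subseteq F'$.

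Now the contradiction should come from comparing $F$ with $F'$. Since $\mathsf{Ult}\subseteq F\subseteq F'$ and $F'$ is regular, the closure $\mathcal{C}(F')$ contains $\mathcal{C}(\mathsf{Ult})=X^\omega$, because every $\omega$-word is a limit of ultimately periodic ones (equivalently $\pref{\mathsf{Ult}}=X^*$, so $\pref{F'}=X^*$). Thus $F'$ is a \emph{dense} regular $\omega$-language. But a regular $\omega$-language of first Baire category is nowhere dense: by Theorem~\ref{th.MCreg}, if $F'$ is regular and of first Baire category then no $\zeta\in F'$ is disjunctive, so $F'\subseteq\mathbf{R}_0$, and by the structural fact quoted in Section~\ref{sec.cantor} a nowhere-dense regular $\omega$-language is contained in some $X^*\cdot w\cdot X^\omega$ — more to the point, $F'\subseteq\mathbf{R}_0=\bigcup_{w}(X^\omega\setminus X^*\cdot w\cdot X^\omega)$, and a standard compactness/König's-lemma argument shows a regular subset of $\mathbf{R}_0$ already omits a fixed word, i.e. $F'\subseteq X^\omega\setminus X^*\cdot w\cdot X^\omega$ for some single $w\in X^*$. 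Such a set is nowhere dense, hence not dense, contradicting $\pref{F'}=X^*$.

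The main obstacle is the step asserting that a regular $F'\subseteq\mathbf{R}_0$ must avoid one fixed finite word $w$ uniformly (rather than a different word for each $\omega$-word). This is exactly the non-trivial content behind the remark in Section~\ref{sec.cantor} that a nowhere-dense regular $\omega$-language lies in $X^*\cdot w\cdot X^\omega$: one argues that $F'$, being regular and of first Baire category, is nowhere dense (via Theorem~\ref{th.MCreg}, since disjunctivity of some $\zeta\in F'$ would make $F'$ not of first category), and then invokes the cited result from \cite{eik/St76} that a nowhere-dense regular $\omega$-language is contained in $X^*\cdot w\cdot X^\omega$ for a single $w$; its complement $X^\omega\setminus X^*\cdot w\cdot X^\omega$ is then not dense. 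An alternative, cleaner route avoiding this structural theorem is to observe directly that since $F'$ is regular and dense, by Theorem~\ref{th.regBool} combined with density it would have to be non-meagre — but the most economical argument is: $F'$ dense and regular $\Rightarrow$ $F'$ not of first Baire category (a dense regular $\omega$-language always contains a disjunctive $\omega$-word, by Theorem~\ref{th.MCreg}.\ref{th.MCreg1}$\Leftrightarrow$\ref{th.MCreg2} contrapositively, since a regular set omitting all disjunctive words sits inside $\mathbf{R}_0$ and hence inside some $X^\omega\setminus X^*\cdot w\cdot X^\omega$, which is not dense). This contradicts $F'$ being of first Baire category, completing the proof.
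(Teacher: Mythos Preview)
Your argument through the point $E=\emptyset$ and $F\subseteq F'$ is correct and agrees with the paper. The trouble is in the final step.

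You assert that a regular $\omega$-language of first Baire category is nowhere dense, equivalently that a dense regular $\omega$-language must contain a disjunctive $\omega$-word. This is false. Take $F'=\{a,b\}^{*}\cdot a^{\omega}$: it is regular, it is dense in $\{a,b\}^{\omega}$ since $\pref{F'}=\{a,b\}^{*}$, and it lies inside $\mathbf{R}_{0}$ (every element is eventually constant, hence not disjunctive), so by Theorem~\ref{th.MCreg} it is of first Baire category. Yet its closure is all of $\{a,b\}^{\omega}$, so it is certainly not nowhere dense. Your attempted justification, ``a regular set omitting all disjunctive words sits inside $\mathbf{R}_{0}$ and hence inside some $X^{\omega}\setminus X^{*}\cdot w\cdot X^{\omega}$'', breaks precisely at the word ``hence'': $\mathbf{R}_{0}$ is an infinite union, and a regular subset of it need not lie in any single term (the example above does not, since for every $w$ the word $w\cdot a^{\omega}\in X^{*}\cdot w\cdot X^{\omega}$). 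The structural result you cite from \cite{eik/St76} has \emph{nowhere dense} as a hypothesis, not as a conclusion, so appealing to it here is circular.

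The paper closes the argument differently and avoids density altogether. From $\mathsf{Ult}\subseteq F'$ one passes to the complement: $X^{\omega}\setminus F'$ is regular (Theorem~\ref{th.regBool}) and contains no ultimately periodic $\omega$-word, so by the second clause of Theorem~\ref{th.regBool} it is empty. Hence $F'=X^{\omega}$, which is not of first Baire category --- the desired contradiction.
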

\begin{proof}Assume $F\ \Delta\ E\subseteq F'$ where $E$ is open and $F'$ is a
  regular $\omega$-language of first Baire category.
  As $F\subseteq\mathbf{R}_{0}$, the set $F$ is of
  first Baire category. Now Property~\ref{pty.open} shows that  $E=\emptyset$.

  Then $\mathsf{Ult}\subseteq F\subseteq F'$. Since $F'$ is regular, we have
  $F'=X^{\omega}$ which is not of first Baire category.
\end{proof}
\begin{corollary}\label{c.Ult}
  The family $\mathcal{A}$ is not closed under countable union.
\end{corollary}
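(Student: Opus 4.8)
The plan is to exhibit $\mathcal{A}$ as not closed under countable union by writing some $\omega$-language $F$ with $\mathsf{Ult}\subseteq F\subseteq\mathbf{R}_{0}$ as a countable union of members of $\mathcal{A}$. By Lemma~\ref{l.Ult} such an $F$ is not in $\mathcal{A}$, so it suffices to produce the decomposition.

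First I would take the most economical candidate, $F=\mathsf{Ult}$ itself, the set of all ultimately periodic $\omega$-words. This clearly satisfies $\mathsf{Ult}\subseteq F$, and it satisfies $F\subseteq\mathbf{R}_{0}$ because every ultimately periodic $\omega$-word $w\cdot v^{\omega}$ fails to be disjunctive (it contains only finitely many distinct infixes), so by the characterisation~\eqref{eq.R0} it lies in $\mathbf{R}_{0}$. Hence Lemma~\ref{l.Ult} applies and $\mathsf{Ult}\notin\mathcal{A}$. On the other hand $\mathsf{Ult}=\bigcup_{w,v\in X^{*}}\{w\cdot v^{\omega}\}$ is a countable union of singletons, and each singleton $\{w\cdot v^{\omega}\}$ is a regular $\omega$-language, hence belongs to $\mathcal{A}$ by Theorem~\ref{th.finkel}. (Alternatively, even without regularity: a singleton $\{\xi\}$ with $\xi$ non-disjunctive is finite and contains no disjunctive $\omega$-word, so lies in $\mathcal{A}$ by Corollary~\ref{c.finite}.) This exhibits a countable union of sets in $\mathcal{A}$ whose union is not in $\mathcal{A}$.

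The only point that needs a word of care is that each piece really is in $\mathcal{A}$: a single ultimately periodic $\omega$-word $w\cdot v^{\omega}$ is accepted by a finite automaton, so $\{w\cdot v^{\omega}\}$ is regular and Theorem~\ref{th.finkel} gives it the Automatic Baire property directly. I expect no genuine obstacle here — the substance was already done in Lemma~\ref{l.Ult}, and this corollary is just the observation that $\mathsf{Ult}$ is a countable union of regular $\omega$-languages while not itself lying in $\mathcal{A}$.

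\begin{proof}
  By~\eqref{eq.R0}, every ultimately periodic $\omega$-word lies in
  $\mathbf{R}_{0}$, since $w\cdot v^{\omega}$ is not disjunctive. Hence
  $\mathsf{Ult}\subseteq\mathbf{R}_{0}$, and Lemma~\ref{l.Ult} (with
  $F=\mathsf{Ult}$) shows $\mathsf{Ult}\notin\mathcal{A}$. On the other hand,
  $\mathsf{Ult}=\bigcup_{w,v\in X^{*}}\{w\cdot v^{\omega}\}$ is a countable
  union, and each $\{w\cdot v^{\omega}\}$ is a regular $\omega$-language, hence
  an element of $\mathcal{A}$ by Theorem~\ref{th.finkel}. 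Thus $\mathcal{A}$ is
  not closed under countable union.
\end{proof}
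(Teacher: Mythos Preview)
Your argument is correct. You exhibit $\mathsf{Ult}$ as a countable union of regular singletons $\{w\cdot v^{\omega}\}$ (each in $\mathcal{A}$ by Theorem~\ref{th.finkel}), while Lemma~\ref{l.Ult} with $F=\mathsf{Ult}$ shows $\mathsf{Ult}\notin\mathcal{A}$. The paper takes a different but parallel route: it uses $\mathbf{R}_{0}$ rather than $\mathsf{Ult}$, decomposing it via Eq.~\eqref{eq.R0} as the countable union $\bigcup_{w\in X^{*}}(X^{\omega}\setminus X^{*}\cdot w\cdot X^{\omega})$ of regular nowhere dense sets, and then Lemma~\ref{l.Ult} (with $F=\mathbf{R}_{0}$) gives $\mathbf{R}_{0}\notin\mathcal{A}$. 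Both proofs hinge on Lemma~\ref{l.Ult} in the same way; they differ only in which witness is chosen and how it is written as a countable union of regular pieces. Your choice has the virtue that the pieces are the simplest possible regular $\omega$-languages (singletons), while the paper's choice makes direct use of the structural description~\eqref{eq.R0} already established. One cosmetic point: in your union $\bigcup_{w,v\in X^{*}}\{w\cdot v^{\omega}\}$ you should restrict to $v\ne e$, since $e^{\omega}$ is not an $\omega$-word under the paper's conventions.
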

\begin{proof}
  As
  $\mathbf{R}_{0}=\bigcup_{w\in X^{*}} (X^{\omega}\setminus X^{*}\cdot w\cdot
  X^{\omega})$ and every $\omega$-language
  $X^{\omega}\setminus X^{*}\cdot w\cdot X^{\omega}$ is regular and nowhere
  dense in $X^{\omega}$ (cf. \cite{eik/St76}), the assertion follows
  immediately.
\end{proof}

\section{Simple counter-examples}\label{sec.ctr}
In Corollary~\ref{c.finite} we have seen that there are even finite
$\omega$-languages having the Baire property but not the Automatic Baire
property. Those finite \mbox{$\omega$-languages} contain $\omega$-words
$\xi\notin \mathsf{Ult}$ and are, therefore, not context-free
(e.g. \cite{tcs/EH93,handbook/St97}), that is accepted by push-down automata.

In this part we show that also a slight increase of the computational power of
accepting devices results in open or closed $\omega$-languages not having the 
Automatic Baire property.

As measure in Cantor space we use the equidistribution. For a language
$W\subseteq X^{*}$ we set $\sigma_{X}(W):= \sum_{w\in W}|X|^{-|w|}$. Then
$\mu_{=}(W\cdot X^{\omega})= \sigma_{X}(W)$, if $W\subseteq X^{*}$
prefix-free, that is, $w\sqsubseteq v$ and $w,v\in W$ imply $w=v$.

Since $\sigma_{X}(W)$ is a rational number for regular languages
$W\subseteq X^{*}$, we have the following.
\begin{theorem}[Theorem~4.16 of \cite{takeuti01}]\label{th.takeuti}
  The measure $\mu_{=}(F)$ of a regular $\omega$-language is rational.
\end{theorem}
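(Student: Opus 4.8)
The plan is to reduce the statement about $\omega$-languages to the corresponding statement about languages of finite words, and then to invoke the rationality of $\sigma_X(W)$ for regular $W\subseteq X^*$. The key observation is that a regular $\omega$-language $F\subseteq X^\omega$ can be written in a normal form that makes the measure computable. First I would recall (or re-prove) that every regular $\omega$-language $F$ admits a representation $F=\bigcup_{i=1}^n U_i\cdot V_i^\omega$ where the $U_i,V_i$ are regular languages; by standard automata-theoretic manipulations one may further assume the union is disjoint and that each $V_i^\omega$ is ``saturated'' in the sense that one can detach a prefix language $W_i\subseteq X^*$ which is \emph{prefix-free} and for which $F=\bigcup_i W_i\cdot V_i^\omega$ with the $W_i\cdot X^\omega$ pairwise disjoint. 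Then $\mu_=(F)=\sum_i \sigma_X(W_i)\cdot \mu_=(V_i^\omega)$, and since the $\sigma_X(W_i)$ are rational by the cited Theorem~4.16 of \cite{takeuti01}, it remains to show each $\mu_=(V_i^\omega)$ is rational.

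For the value $\mu_=(V^\omega)$ of an iterated regular language, I would use the automata-theoretic characterisation: take a deterministic Muller (or parity) automaton $\mathcal{A}$ recognising $F$, view it as a finite Markov chain under the equidistribution (each state has $|X|$ outgoing edges each taken with probability $|X|^{-1}$), and observe that $\mu_=(F)$ equals the probability that a random infinite run enters and stays in an accepting bottom strongly connected component — equivalently, the probability of being absorbed into the union of accepting recurrent classes. This absorption probability is the solution of a finite linear system with rational coefficients ($0$, $1$, and $|X|^{-1}$), hence is rational. This single argument in fact dispenses with the normal-form reduction of the first paragraph: one obtains rationality of $\mu_=(F)$ directly from any deterministic $\omega$-automaton for $F$.

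The main obstacle — really the only place where care is needed — is justifying the reduction from the Muller/parity acceptance condition to a hitting/absorption probability, i.e. showing $\mu_=(F)$ coincides with the probability that the random trajectory's set of infinitely-often-visited states is an accepting set. This rests on the ergodic fact that, almost surely, the set of states visited infinitely often by the random run is exactly the state set of the (random) bottom strongly connected component reached, together with the fact that the measure is concentrated on runs that do reach some bottom SCC. Once this is in place, the accepting bottom SCCs form an absorbing set $R$, and $\mu_=(F)=\mu_=(\{\xi : \text{run of }\mathcal{A}\text{ on }\xi\text{ eventually enters }R\})$, whose defining vector $(p_q)_{q}$, $p_q=\Pr[\text{absorption into }R\mid\text{start at }q]$, satisfies $p_q = |X|^{-1}\sum_{x\in X} p_{\delta(q,x)}$ for $q\notin R$-components and $p_q=1$ for $q$ in an accepting bottom SCC, $p_q=0$ for $q$ in a non-accepting bottom SCC. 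This linear system over $\mathbb{Q}$ has a unique solution in $[0,1]$, so $p_{q_0}=\mu_=(F)\in\mathbb{Q}$.

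An alternative, more elementary route that avoids Markov-chain language: by Theorem~\ref{th.MCreg}, a regular $F$ is either of first Baire category — in which case $\mu_=(F)=0$ is trivially rational — or it contains a disjunctive $\omega$-word and one can show $\mu_=(F)$ differs from $\mu_=(E)$ for a suitable regular open $E=W\cdot X^\omega$ by a measure-zero set, reducing to $\sigma_X(W')$ for a prefix-free regular $W'$; but making this precise still requires controlling the iterated parts $V_i^\omega$, so I expect the linear-system argument of the second and third paragraphs to be the cleanest complete proof.
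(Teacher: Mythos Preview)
The paper does not prove this theorem: it quotes it from \cite[Theorem~4.16]{takeuti01}, prefaced only by the remark that $\sigma_X(W)$ is rational for regular $W\subseteq X^*$. So there is no detailed argument in the paper to compare against beyond that hint.

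Your Markov-chain route (paragraphs two and three) is correct and is the standard proof: run a deterministic Muller automaton on i.i.d.\ uniform input, use that almost every trajectory is absorbed in a bottom strongly connected component whose state set is then exactly the set of states visited infinitely often, and read off $\mu_{=}(F)$ as a sum of absorption probabilities, each obtained from a rational linear system. This is self-contained and needs nothing from the other paragraphs.

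The normal form in your first paragraph, however, need not exist. Take $F=\{a,b\}^{*}\cdot a^{\omega}\cup\{b^{\omega}\}$. One has $b\cdot\xi\in F\Leftrightarrow\xi\in F$, so $F\cap b^{n}\cdot X^{\omega}=b^{n}\cdot F$ for every $n$; yet $F$ is not of the shape $V^{\omega}$, since any such $V$ would contain some $b^{k}$ (to produce $b^{\omega}$) and some $a^{m}$ (to produce $a^{\omega}$), giving $(b^{k}a^{m})^{\omega}\in V^{\omega}\setminus F$. In a decomposition $F=\bigcup_i W_i\cdot V_i^{\omega}$ with the $W_i$ prefix-free and the $W_i\cdot X^{\omega}$ pairwise disjoint, the piece covering $b^{\omega}$ would have some $b^{n}\in W_i$, and prefix-freeness then forces $F\cap b^{n}\cdot X^{\omega}=b^{n}\cdot V_i^{\omega}$, hence $F=V_i^{\omega}$, a contradiction. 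Since you already observe that the automaton argument supersedes this reduction, the overall proposal survives; your final ``alternative'' paragraph is, as you concede, only a heuristic and would still need the linear-system computation.
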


We define the language $V_{3}\subseteq \{a,b\}^{*}$ by the equation
$V_{3}= a \cup b\cdot V_{3}^{3}$. This lan\-guage is prefix-free and satisfies
the condition that $w\cdot a^{3\cdot |w|}\in V_{3}\cdot \{a,b\}^*$ when
$w\in \{a,b\}^*$ (cf. Proposition~1.1 of \cite{ita/St05}). Moreover, it can be
accepted by a deterministic one-counter automaton using empty-storage
acceptance (cf. Example~6.3 of \cite{handbook/ABB97}).  Accordingly, the
$\omega$-languages
$V_{3}\cdot \{a,b\}^\omega, F:=\{a,b\}^\omega\setminus V_{3}\cdot
\{a,b\}^\omega$ and $V_{3}\cdot c\cdot\{a,b,c\}^\omega$ are also accepted by
deterministic one-counter automata \cite{tcs/EH93,handbook/St97}.

Since $V_{3}$ is prefix-free, the measure of these $\omega$-languages can be
easily computed from the value $\sigma_{X}(V_{3})$ which in turn is the
minimum positive solution $t_{|X|}$ of the equation
(cf. Theorem~3.1 of \cite{ita/St05})
\begin{equation}
  \label{eq.V3}
  t=  |X|^{-1}\cdot(1+t^{3})\,. 
\end{equation}
The minimum positive solutions $t_{2}=\frac{\sqrt{5}-1}{2}<1$ and $0<t_{3}<1$
are irrational\footnote{In case of $t_{3}$ assume $t_{3}=p/q$ where $p\ne q$
  are natural numbers having no common prime divisor. Then Eq.~(\ref{eq.V3})
  yields $3\cdot p\cdot q^{2}= p^{3}+q^{3}$ which is impossible.}.

The first example presents an open $\omega$-language accepted by a
deterministic one-counter automaton not satisfying the Automatic Baire
property.
\begin{example}\label{ex.open}
  We consider the open $\omega$-language
  $F_{1}:= V_{3}\cdot c\cdot\{a,b,c\}^\omega\subseteq \{a,b,c\}^\omega$.
  Since $\mu_{=}(\{a,b\}^\omega)=0$ in $\{a,b,c\}^\omega$, we obtain
  $\mu_{=}(F_{1})=\mu_{=}(F_{1}\cup \{a,b\}^\omega)=t_{3}/3$ which is
  irrational. Observe that $F_{1}\cup \{a,b\}^\omega$ is closed.

  If $E\subseteq \{a,b,c\}^\omega$ is open and regular then
  $\mathcal{C}(E)\setminus E$ is regular and nowhere dense, hence
  $\mu_{=}(\mathcal{C}(E)\setminus E)=0$ by Theorem~\ref{th.MCreg}. Now
  according to Theorem~\ref{th.takeuti} $\mu_{=}(E)=\mu_{=}(\mathcal{C}(E))$
  is rational. Thus $\mu_{=}(F_{1})\ne \mu_{=}(E)$.

  If $\mu_{=}(F_{1})> \mu_{=}(E)=\mu_{=}(\mathcal{C}(E))$ then
  $F_{1}\setminus\mathcal{C}(E)$ is non-empty and open; if
  $\mu_{=}(E)> \mu_{=}(F_{1})=\mu_{=}(F_{1}\cup \{a,b\}^\omega)$ then
  $E\setminus (F_{1}\cup \{a,b\}^\omega)\subseteq E\setminus F_{1}$ is
  non-empty and open. In both cases $F_{1}\ \Delta\ E$ contains a non-empty
  open subset, hence $F_{1}$ cannot have the Automatic Baire property.
\end{example}
Next we present a closed $\omega$-language accepted by a deterministic
one-counter automaton not having the Automatic Baire property.
\begin{example}[Example~3 of \cite{csl/St97}]\label{ex.closed}
  Define $F_{2}= \{a,b\}^\omega\setminus V_{3}\cdot \{a,b\}^\omega$ as a
  subset of the space $X^\omega=\{a,b\}^\omega$.  Then $F_{2}$ is closed and
  has, according to the value of $t_{2}$, measure
  $\mu_{=}(F_{2})=1-t_{2}= \frac{3-\sqrt{5}}{2}>0$. Since
  $w\cdot a^{3\cdot |w|}\in V_{3}\cdot \{a,b\}^*\subseteq
  X^{*}\setminus\pref{F_{2}}$,
  for $w\in \{a,b\}^*$, $F_{2}$ is nowhere dense.

  The measure $\mu_{=}$ trivially satisfies the balance condition. Now
  Lemma~\ref{l.ABaire} shows that $F_{2}$ does not have the Automatic Baire
  property.
\end{example}

\section*{ORCID}

\noindent Ludwig Staiger - \url{https://orcid.org/0000-0003-3810-9303}



\pagebreak{}

\end{document}